\newtheorem{thm}{Theorem}
\newtheorem{lem}{Lemma}
\newtheorem{cor}{Corollary}
\algnewcommand{\Input}{\item[\textbf{Input:}]}
\algnewcommand{\Output}{\item[\textbf{Output:}]}
\newcommand{\LongState}[1]{\State
  \parbox[t]{\dimexpr\linewidth-\algorithmicindent-2em}{\hangindent=1.5em\relax
    #1\strut}}
\DeclareMathOperator{\Inf}{inf}
\newenvironment{ruledalg}[2]{\smallskip
  {\centering\textbf{#1}: #2\par}%
  \noindent\hrulefill
  \begin{algorithmic}}{\end{algorithmic}\unpenalty\unskip\penalty100000
  \hrulefill\par\smallskip}
\begin{document}

\title{Super-exponential query complexity reduction via
  noise-resistant quantum search} \author{Daniel Z. Zanger,
  Ph.D.\thanks{Email: danielzanger@gmail.com, Mailing address: 3133
    Connecticut Ave., NW \#521, Washington, DC 20008, USA.}}
\date{July 6, 2017}

\maketitle

\begin{abstract}
  In the SEARCH WITH ADVICE problem, a single entry of interest within
  a database of $N$ entries is to be found assuming that an ordering
  of the entries, from that with the highest probability of being the
  entry of interest (as determined by a so-called advice distribution)
  to that with the lowest, is provided. We present a quantum algorithm
  that, in the presence of significant levels of quantum noise, solves
  SEARCH WITH ADVICE for a power law advice distribution with
  average-case query complexity $O(1)$ as $N\rightarrow\infty$. Since
  as we also show the best classical algorithms for this problem
  exhibit average-case query complexity of order no better than
  $\log(N)$, our quantum algorithm provides a super-exponential
  reduction in query complexity.
\end{abstract}

\noindent \textbf{Keywords:} quantum database search, query
complexity, Grover algorithm, quantum noise
 
\section{Introduction}

Well-known theoretical results have established that quantum
computation offers the promise of dramatic reductions in the
computational complexity of algorithms for data decryption (Shor
(1994)) and database search (Grover (1996, 1997)), among other critical
applications. Yet one of the key challenges for practical quantum
computation remains the sensitivity of quantum computations to errors
caused by ambient quantum noise. One of the most commonly suggested
ways of dealing with this problem is by means of fault-tolerant
quantum error correction (see for example Chapter~10 in Nielsen and
Chuang (2000)). However, this approach can be challenging to implement
as it is very expensive with respect to computational resources, and
large circuit sizes are in general required to apply it. Rather than
expending great effort to explicitly correct errors caused by quantum
noise, the approach we adopt in this paper is instead to accept their
presence in the context of quantum computations and to attempt to
reach a viable solution anyway. Under this as a guiding principle, we
analyze the computational complexity of quantum algorithms for
database search.

Assuming the absence of quantum noise, the renowned Grover quantum
search algorithm (Grover (1996,1997)), already cited above, finds a
single item of interest (called the ``marked element'') in a database
of $N$ items with just $O(\sqrt{N})$ database query calls as
$N\rightarrow\infty$ in the worst case, as compared with a worst-case
query complexity of $\Omega(N)$ for the best classical algorithms.  In
this article, however, we analyze the performance of quantum database
search algorithms in the presence of quantum noise. We study a
generalization, called SEARCH WITH ADVICE, of the standard,
unstructured search problem addressed by Grover. In SEARCH WITH ADVICE
(see \S 2 below), a so-called ``advice'' probability distribution
giving the probability of each item's actually being the marked
element, is presupposed, and a ``hint'', in the form of an ordering of
the $N$ items from highest advice probability to lowest, is actually
given as input to the search algorithm intended to be used to solve
the problem (so that SEARCH WITH ADVICE specializes to the standard
search problem under a uniform advice distribution).

In this article, the average-case query complexity of an algorithm is
defined as the expected number of queries to an oracle function
required by it (see also \S 2). Under a direct generalization of
depolarizing channel noise, with respect to which noise in the form of
a corresponding quantum channel occurs with probability $p=p(N)$, the
quantum algorithm we present here (as our Algorithm 2 in \S 3 below)
solves the SEARCH WITH ADVICE problem for advice distribution
following a power law with exponent $\frac{1}{N}-2$ with average-case
query complexity of just $O(1)$ as $N\rightarrow\infty$, given a noise
level as large as $p=p(N)=\frac{1}{\log(N)}$ (see our own Corollary~1
in \S4 below). This means that the expected query complexity is in
fact bounded by a fixed constant (not depending on $N$) for all
positive integers $N$ no matter how large, and this contrasts with a
minimal average-case query complexity no better than order $\log(N)$
(that is, $\Omega(\log(N))$ as $N\rightarrow\infty$) for the best
classical algorithm applied to the same problem (once again see
Corollary 1 below). Hence the quantum search algorithm we introduce in
this paper achieves a dramatic super-exponential reduction in
computational complexity over the best possible classical algorithm
for this search problem, even in the presence of quantum noise levels
declining quite slowly as $N\rightarrow\infty$. Quantum supremacy
(Preskill (2013)) refers to an empirical demonstration that a quantum
processor can perform some computational task faster than any
classical computer. Offering such a dramatic quantum computational
speed-up over corresponding classical algorithms as well as being as
robust as it is with respect to quantum noise, our quantum algorithm
for solving SEARCH WITH ADVICE appears, for example, to possess
significant potential as a means to establishing practical quantum
supremacy.

Our database search algorithm here is a combination of a quantum
database search algorithm in Montanaro (2010), which solves the SEARCH
WITH ADVICE problem without addressing the possibility of quantum
noise, and that in Vrana et al. (2014), which solves the standard,
unstructured database search problem in the presence of quantum
noise. In fact, by comparison, the algorithm in Vrana et al. (2014)
(see in particular their Algorithm 3 in their \S 2.4) exhibits a
worst-case query complexity no better than order $\frac{N}{\log(N)}$
(that is, a query complexity of
$\Omega\left(\frac{N}{\log(N)}\right)$) for the standard database
search problem (with no advice distribution) in the context of the
same noise model as here with noise probability once again
$p=p(N)=\frac{1}{\log(N)}$. This is not nearly as dramatic a
computational speed-up relative to the classical case as we are able
to achieve here in the average-case complexity setting for SEARCH WITH
ADVICE. Rather similarly to the results in Vrana et al.~(2014) and
once again for the search problem without advice distribution, the
authors of Cohn et al. (2016) have deduced that the maximum
depolarizing channel noise probability possible to achieve any
computational advantage over the fastest classical algorithms is
$p=p(N)=\Omega\left(\frac{\log(\sqrt{N})}{\sqrt{N}}\right)$ as
$N\rightarrow\infty$ (see \S~III.B.1 in Cohn et al. (2016)), which is
a much smaller level of quantum noise than we allow to gain a much
greater quantum speed-up in the average case for power law
distribtions under SEARCH WITH ADVICE. Moreover, in Corollary~2 in
\S4, we also establish for the same problem that the amount of quantum
noise can even be increased to order
$p=p(N)=\left(\frac{1}{\log(N)}\right)^{q}$ for any $q$, $0<q<1,$
while still solving the problem with query complexity of order no
larger than $(\log(N))^{1-q}$. Hence an at least polynomial speedup
with respect to the best classical algorithms is still achieved in
this case.

The paper is organized as follows. In the next section we formally state the SEARCH WITH ADVICE problem and introduce some necessary background concepts and notation. In \S 3 we present our noise-resistant quantum search algorithm for solving this problem. Our results on the query complexity of this quantum algorithm and corresponding classical algorithms for solving the same problem are stated and proved in \S 4.

\section{The Search with Advice Problem}
We now state the SEARCH WITH ADVICE problem.

\begin{ruledalg}{Problem}{SEARCH WITH ADVICE}
  \Input An oracle function $f:\{1,...,N\}\rightarrow \{0,1\}$ on
  $N=2^{q}$ elements, for any positive integer $q$, that takes the
  value $1$ on precisely one element $n_{me}\in\{1,...,N\}$, and an
  ``advice'' probability distribution $\mu=(\mu_{n}),n=1,...,N$, where
  $\mu_{n}$ is an assessed probability that in fact $n=n_{me}.$

  \Output The unique element $n_{me}$, called the ``marked element'',
  for which $f(n_{me})=1$.  
\end{ruledalg}

In this paper our interest is in analyzing and comparing algorithms ---
both classical and quantum --- that solve the SEARCH WITH ADVICE
problem above with minimal query complexity. In the context of SEARCH
WITH ADVICE, the query complexity (Montanaro (2010)) of a classical or
quantum algorithm is the number of queries to the oracle function $f$
in the statement of the problem that is required to identify the
marked element. Consider any quantum or classical algorithm $\cal{A}$,
which, given access to the oracle $f$, is designed to solve SEARCH
WITH ADVICE by identifying the marked element. Indeed we call
$\cal{A}$ a \emph{valid} algorithm if it outputs the marked element
with certainty. Let ${\cal{D}}$ denote the class of valid
deterministic classical algorithms.

We intend to investigate the average-case query complexity (that is,
here, the expected query complexity) of efficient quantum and
classical algorithms for solving SEARCH WITH ADVICE. So, let
$T_{\cal{A}}(\mu)$ denote the expectation of the number of queries to
$f$ used by $\cal{A}$, where this expectation is taken over the
distribution $\mu$ and (potentially) the internal randomness of
$\cal{A}$. That is, this average-case query complexity
$T_{\cal{A}}(\mu)$ is defined by
\begin{equation}
T_{\cal{A}}(\mu)=\sum^{N}_{n=1}\mu_{n}T_{\cal{A}}(n),
\end{equation}
where, in turn, $T_{\cal{A}}(n)$ is defined as the expectation of the
number of queries to $f$ used by $\cal{A}$ to identify $n$ as the
marked element, if in fact $n$ is the marked element. We also define a
key corresponding quantity of interest, the classical (that is,
classical algorithm) average-case query complexity of $\mu$, as
\begin{equation}
D(\mu)=\Inf_{\cal{A}\in {\cal{D}}}T_{\cal{A}}(\mu).
\end{equation}

We assume in this paper that the advice distribution $\mu$ is
non-increasing, so that $\mu_{n_{1}}\geq \mu_{n_{2}}$ whenever
$n_{1}\leq n_{2}$. With this assumption, the optimal classical
algorithm to find $n_{me}$ is clearly to query $f(1)$ through $f(N)$
in turn, so the classical average-case query complexity is easily seen
in this case to be
\begin{equation}
\sum^{N}_{n=1}\mu_{n}n=D(\mu)=\min_{\cal{A}\in D}T_{\cal{A}}(\mu).
\end{equation}

\section{Noise-resistant geometric quantum search algorithm}

The idea behind the quantum search algorithm we present in this
section is to combine an algorithm in Vrana et al. (2014), which is
designed to be robust against quantum noise, with one in Montanaro
(2010), which achieves super-exponential expected computational
advantage over classical algorithms in the absence of quantum
noise. Algorithm 2 in this paper achieves super-exponential expected
computational advantage over the optimal classical algorithm in the
presence of significant levels of quantum noise. The algorithms in
Montanaro (2010) and Vrana et al. (2014) are both ultimately based on
the original quantum search algorithm of Grover (Grover (1996, 1997)),
so in that sense, our own quantum search algorithm here for SEARCH
WITH ADVICE, which we present in this section, is as well.

To describe our algorithm, for any positive integer
$N_{1},1\leq N_{1}\leq N$, let $\mathbf{C}^{d_{1}}$ be a
quantum state space of dimension
\begin{equation}
d_{1}=d_{1}(N_{1})=2^{(\mbox{\small min}([\log_{2}(N_{1})]+1,q))},
\end{equation}
where we recall that by definition $N=2^{q}$ for some positive integer
$q$. Note that this definition implies that
$N_{1}\leq d_{1}=d_{1}(N_{1})\leq 2N_{1}$.  Define the action of the
quantum oracle operator on a corresponding computational basis, which
we enumerate as $|n\rangle,n=1,...,d_{1},$ by the unitary map
\begin{equation}
 {\cal{O}}_{f}:\mathbf{C}^{d_{1}}\rightarrow
 \mathbf{C}^{d_{1}},\,|n\rangle \mapsto (-1)^{f(n)}|n\rangle. 
\end{equation}
Note of course that ${\cal{O}}_{f}(|n\rangle)=|n\rangle$ only at $n=n_{me}$ (and
of course we still assume there is exactly one marked element in the
entire set $\{1,...,N\}$). Let
\begin{equation}
|\psi\rangle=\frac{1}{d_{1}^{1/2}}\sum^{d_{1}}_{n=1}|n\rangle
\end{equation}
be the equal superposition state, which can be generated using the
Hadamard transform (see Nielsen and Chuang (2000), Chapter 6), and
also define a corresponding unitary operator via
\begin{equation}
{\cal{U}}_{|\psi\rangle}=I-2|\psi\rangle\langle\psi|,
\end{equation}
where $I$ is of course the identity operator. In addition, we require
a suitable model of quantum noise, and we consider a generalization
(as in Vrana et al. (2014)) of the depolarizing channel. So let $T$ be
any arbitrary, given quantum channel (quantum operation) acting on
density operators $\rho$ on the state space $\mathbf{C}^{d_{1}}$, and
define a quantum noise model, parametrized by a probability value
$p\in [0,1]$, via
\begin{equation} {\cal{N}}_{p}(\rho)=(1-p)\rho+pT(\rho).
\end{equation}
When $T=\frac{I}{d_1}$, (8) is of course the standard
$d_{1}$-dimensional depolarizing channel (see Nielsen and Chuang
(2000)).

We now state Subroutine~1, which will be called by Algorithm 1 below
and hence by our main quantum search algorithm (Algorithm 2) for
SEARCH WITH ADVICE. Subroutine 1 is in essence the Grover iteration
step appearing in the standard version of the Grover algorithm but
with quantum noise present (see also Vrana et al. (2014) or for, for
the version of the Grover iteration step without noise, Chapter 6 in
Nielsen and Chuang (2000)).

\begin{ruledalg}{Subroutine~1}{Grover iteration with quantum noise}

  \Input   The oracle function $f$ from the statement of the SEARCH WITH ADVICE
  problem above;\, a positive integer $N_{1}$ with $1\leq N_{1}\leq
  N$; a density operator
  $\rho:\mathbf{C}^{d_{1}}\rightarrow\mathbf{C}^{d_{1}}$ where
  $d_{1}=d_{1}(N_{1})=2^{(\min([\log_{2}(N_{1})]+1,q))}$;
  a desired number of iterations $M$. Moreover, assume that the equal
  superposition state $|\psi\rangle$ as in (6) has been prepared and
  that quantum noise is modeled as in (8) above for some $p\in
  [0,1]$.

  \Output The density operator resulting from application of $M$
  Grover iterations with noise to $\rho$.

  \State $\mathit{count} :=1$;
  \State $\mathit{Groverstep} := \rho$;

  \While {$\mathit{count} \le M$}
  \State $\mathit{Groverstep} :={\cal{U}}_{|\psi\rangle}({\cal{O}}_{f}
  ({\cal{N}}_{p}(\mathit{Groverstep})){\cal{O}}^{\dagger}_{f}){\cal{U}}^{\dagger}_{|\psi\rangle}$; 
  \State $\mathit{count} := \mathit{count}+1$;
  \EndWhile
  \State \Return $\mathit{Groverstep}$;
\end{ruledalg}

Exploiting the basic Grover iteration above, we now state a version of
the Grover quantum search algorithm (Algorithm 1 below) which will be
invoked within our noise-resistant geometric quantum search algorithm
(Algorithm 2) below. Algorithm 1 here has itself appeared as Algorithm
1 in \S 2.4 of Vrana et al. (2014). In order to state it, define, for
any real numbers $\epsilon, c > 0$ and all nonnegative integers $i$,
the value 
\begin{equation}
\alpha_{i}(\epsilon)=\frac{1}{\sqrt{1+\frac{i}{c\log(1/\epsilon)}}}.
\end{equation}

\begin{ruledalg}{Algorithm 1}{Noise-resistant Grover search}
\Input The function $f$ from the statement of the SEARCH WITH ADVICE
problem above; two integers $n_{1},n_{2}\in\{1,...,N\},$ indicating
where a search of some subset of consecutive numbers from among the
set $1,...,N$ is to begin and end, respectively, inclusive of the two
numbers $n_{1},n_{2}$; two adjustable parameters $\epsilon,
c>0$. Assume as well that quantum noise (corresponding to some $p\in
[0,1]$) affects Algorithm 1 through its presence in Subroutine 1, and
define $N_{1}=n_{2}-n_{1}+1$. 

\Output:  The marked element $n_{me}$ if found; otherwise $0$.

\For{$i=0$, $1$, $2$, \ldots}
\LongState{1.  Prepare the equal superposition state $|\psi\rangle=\frac{1}{d_{1}^{1/2}}\sum^{d_{1}}_{n=1}|n\rangle$ on a quantum register $\mathbf{C}^{d_{1}}$,
where $d_{1}=d_{1}(N_{1})$ is as in (4);}
\LongState{2. Let $\rho=\frac{1}{d_{1}}\sum^{d_{1}}_{n=1}|n\rangle\langle
n|$, and apply Grover iteration with quantum noise (Subroutine 1)
above with 
inputs $f$, $N_{1}$, $\rho,|\psi\rangle$, and
$M=\left[\alpha_{i}(\epsilon,c)\frac{\pi}{4}\sqrt{d_{1}}\right]$;}
\LongState{3. Measure $\min([\log_{2}(N_{1})]+1,q)$ qubits in the standard
basis, and check the result using one oracle invocation;}
\EndFor
\If{$n_{me}$ found}
\State\Return $n_{me}$
\Else
\State\Return $0$
\EndIf
\end{ruledalg}

The geometric quantum search algorithm of Montanaro (2010) (Algorithm
1 in \S 2.1 there), on which Algorithm 2 below is in part based, does
not incorporate the possibility of quantum noise as is done here in
Subroutine 1 and Algorithm 1 as above. Our own noise-resistant
geometric quantum search algorithm (Algorithm 2) for SEARCH WITH
ADVICE, which does incorporate quantum noise, in essence merges the
algorithms of both Montanaro (2010) and Vrana et
al. (2014). Informally Algorithm 2 consists in partitioning
the input into successive blocks which increase in size geometrically
(hence the algorithm's name) and performing Algorithm 1 on each of
these blocks. We are now in a position to state it.

\begin{ruledalg}{Algorithm 2}{Noise-resistant geometric quantum search}
  \Input The oracle function $f:\{1,...,N\}\rightarrow\{0,1\}$ from
  the SEARCH WITH ADVICE problem as above; Advice distribution $\mu
  =(\mu_{n})$ (though all that is actually needed to implement the
  algorithm is just the ordering of the numbers $1,...,N$ as
  $n_{1},...,n_{N},$ where the $n_{i}\in\{1,...,N\}$ for all
  $i=1,...,N$, the $n_{i}$ are all distinct, and $\mu_{n_{i}}\geq
  \mu_{n_{i+1}},i=1,...,N-1$); chosen real values $\epsilon, c >0$.
  Assume as well that quantum noise is present in Algorithm 2 through
  its appearance in Subroutine 1 (for some $p\in [0,1]$).

  \Output The marked element $n_{me}$.

  \State $\mathit{start} :=1$;
  \State $\mathit{end} :=1$;
  \State $\mathit{step} :=0$;

  \While {$\mathit{start} \le N$}
  \If{$\mathit{end} - \mathit{start} \ge 100$}
  \LongState{Perform Noise-resistant Grover search (Algorithm 1) with
  $\epsilon,c >0$ to identify $n_{me}$ or its absence in the subset
  $\{\mathit{start}, \ldots, \mathit{end}\}$ (where $n_{1}=start$ and 
  $n_{2}=\mathit{end}$ in the notation of Algorithm 1);}
  \Else
  \LongState{Perform classical (non-quantum) search (as in \S 2) to
  identify $n_{me}$ or its absence within the set $\{\mathit{start},
  \ldots, \mathit{end}\}$;}
  \EndIf
  \If{$n_{me}$ was found}
  \State\Return{$n_{me}$}
  \Else
  \State $\mathit{step}:= \mathit{step}+1$;
  \State $\mathit{start} := \mathit{end}+1$;
  \State $\mathit{end} := \min(\mathit{start}+[e^{\mathit{step}}]-1,N$); 
  \EndIf
  \EndWhile
\end{ruledalg}

\section{Query Complexity Results}

Now for any valid quantum or classical algorithm ${\cal{A}}$ as at the
beginning of \S 2, recall, for any advice measure $\mu$, the
average-case query complexity values $T_{\cal{A}}(\mu)$ and
$T_{\cal{A}}(n),n=1,...,N,$ from (1). We denote by ${\cal{A}}_{QS}$
our Algorithm 2 in \S 3, which is a valid algorithm, and we can
consider the associated quantities  $T_{{\cal{A}}_{QS}}(\mu)$ and
$T_{{\cal{A}}_{QS}}(n),n=1,...,N.$ Also, we can write
${\cal{A}}_{QS}={\cal{A}}_{QS}(p)$ to make the ambient quantum noise
level $p\in [0,1]$ (as in Subroutine 1) explicit. 

The following Theorem 1 is a direct consequence of Theorem 3 in \S 2.4
of Vrana et al. (2014), along with the basic observation (see \S 1.2
in Montanaro (2010) or Motwani and Raghavan (1995), Exercise 1.3 in \S
1.2 there) that, given a (classical or quantum) search algorithm
$\cal{A}$ that uses $k$ query calls and outputs the marked element
$n_{me}$ with probability $s$, there is a classical algorithm
${\cal{A}}_{1}$ that takes ${\cal{A}}$ and $s$ as inputs and outputs
the marked element with certainty, doing so using an expected number
of queries of at most $\frac{k+1}{s}$. In the statement of Theorem 1
below we analyze the quantum algorithm ${\cal{A}}$ given as Algorithm
1 in \S 3 above, and, as the observation just mentioned is a standard
one from the theory of randomized algorithms, we will for any value
probability value $s$ simply identify the algorithm ${\cal{A}}$ given
in \S 3 as Algorithm 1 with the algorithm
${\cal{A}}_{1}={\cal{A}}_{1}({\cal{A}},s)$ and can use the same
notation to refer to both of them.

\begin{thm}[Vrana et al. (2014)]
  With respect to the above Algorithm 1 (for which in this theorem
  statement we use the notation ${\cal{A}}$), let $N_{1}$ be any
  integer with $100\leq N_{1}\leq N,$ and also let
  $d_{1}=d_{1}(N_{1})$ be as in (4). Suppose as well that
  $\epsilon\in (0,\frac{1}{2}]$ is given. Furthermore, assume that
  $p=p(N),$ which may depend on $N$, is any value $p\in [0,1]$ which
  describes the ambient quantum noise level via
  ${\cal{N}}_{p}(\rho)=(1-p)\rho+pT(\rho)$, where in turn $\rho$ is
  any density operator acting on the state space
  $\mathbf{C}^{d_{1}}$ and $T$ is a given quantum channel acting on
  the corresponding quantum register when executing Algorithm 1. Then,
  Algorithm 1 with $c=10$ and $\epsilon$ as given (or, more
  technically, the algorithm
  ${\cal{A}}_{1}={\cal{A}}_{1}({\cal{A}},1-\epsilon)$ as discussed in
  the previous paragraph) finds the marked element with certainty (if
  it is present within the subset of $\{1,...,N\}$ of size $N_{1}$
  being searched by Algorithm 1) after an expected number of not more
  than
\begin{equation}
\left(\frac{100}{1-\epsilon}\right)\left(1.02+d_{1}p+\sqrt{d_{1}}\right)\log\left(\frac{1}{\epsilon}\right) 
\end{equation}
oracle queries, where $d_{1}\leq 2N_{1}$. 
\end{thm}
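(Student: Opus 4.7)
The plan is to reduce the statement to Theorem~3 of Vrana et al.~(2014) combined with the standard amplification observation cited in the paragraph preceding the theorem. First I would invoke the Vrana et al. result on the noisy Grover iteration: their analysis shows that when the noise level is $p$ and the parameter $c=10$ is fixed, one iteration of the outer loop of Algorithm 1 (that is, one run of Subroutine~1 with $M = [\alpha_i(\epsilon)\tfrac{\pi}{4}\sqrt{d_1}]$ followed by a measurement and a single oracle check) has success probability bounded below by a function of $\alpha_i(\epsilon)$, and that after a suitable number of outer iterations the cumulative success probability exceeds $1-\epsilon$. The total number of queries used across all these outer iterations is governed by the series
\begin{equation}
\sum_{i} \alpha_{i}(\epsilon) \frac{\pi}{4}\sqrt{d_{1}} + (\text{corrections from noise}),
\end{equation}
and Vrana et al.\ bound this sum by an expression of the form $C(1.02+d_{1}p+\sqrt{d_{1}})\log(1/\epsilon)$ for an absolute constant $C$; this accounts for the factor $(1.02+d_{1}p+\sqrt{d_{1}})\log(1/\epsilon)$ appearing in (10).

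Next I would apply the standard randomized-amplification fact recalled in the paragraph before the theorem: any algorithm that uses $k$ queries and returns the marked element with probability $s$ can be converted into a zero-error algorithm whose expected query count is at most $(k+1)/s$, by repeating and verifying each candidate with one extra oracle call. Taking $s = 1-\epsilon$ and $k$ as the bound produced in the first step, dividing by $1-\epsilon$ yields the factor $\tfrac{1}{1-\epsilon}$ in (10). Absorbing the absolute constants produced by the Vrana et al.\ analysis into the displayed constant $100$ then gives precisely the bound $\left(\tfrac{100}{1-\epsilon}\right)\left(1.02+d_{1}p+\sqrt{d_{1}}\right)\log(1/\epsilon)$, and the inequality $d_{1}\leq 2N_{1}$ is just the observation in \S 3 following the definition (4).

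The main (and essentially only) obstacle is bookkeeping: one has to verify that the particular choice $c=10$ together with the $\alpha_i(\epsilon)$ schedule in (9) is exactly the regime in which Vrana et al.'s analysis yields the stated amplitude-lower-bound, and then track the constants through the amplification step carefully enough to conclude that the combined constant can be taken to be $100$ (rather than something slightly larger). Since the structure of Algorithm 1 here is identical to their Algorithm 1 in \S 2.4, the argument is essentially a citation together with a constant-chasing verification, and no new quantum-mechanical content is required beyond what is already in Vrana et al.\ (2014).
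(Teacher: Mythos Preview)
Your proposal is correct and mirrors exactly what the paper does: the paper gives no independent proof of Theorem~1 but states it as a direct consequence of Theorem~3 in \S2.4 of Vrana et al.\ (2014) together with the standard $(k+1)/s$ amplification observation, which is precisely the two-step reduction you outline. The only content beyond citation is the constant bookkeeping you flag, and your identification of the source of each factor in (10) matches the paper's attribution.
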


For an arbitrary advice distribution, Theorem 1 enables us to now
state and prove Lemma 1. Lemma 1 --- our first new result --- is a bound
on the expected query complexity of Algorithm 2 in \S 3.

\begin{lem}
  Let $N$ be any positive integer, and assume that
  $\epsilon\in (0, \frac{1}{2}]$ and $p\in [0,1]$, where $p=p(N)$ may
  depend on $N$. Then the expected number of queries used by Algorithm
  2 in \S 3 for any given advice distribution $\mu=(\mu_{n})$ is
  upper-bounded by
\begin{equation}
T_{{\cal{A}}_{QS}(p)}(\mu)\leq e^{2}\sum^{N}_{n=1}\mu_{n}G(n,p,\epsilon),
\end{equation}
where
\[
  G(r,p,\epsilon)=(\frac{200}{1-\epsilon})(1.04+rp+\sqrt{r})\log(\frac{1}{\epsilon})
\]
for $r\in [0,\infty),p\in [0,1],\epsilon\in (0,\frac{1}{2}]$. 
\end{lem}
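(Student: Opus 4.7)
The plan is to bound the expected query cost on each block of Algorithm 2 by $G$ evaluated at that block's size, then use the geometric growth of block sizes to pack the total into $e^2$ times $G(n,p,\epsilon)$ when the marked element is $n$.

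Fix the marked element at some $n\in\{1,\ldots,N\}$ and let $k^*=k^*(n)$ denote the index of the block of Algorithm 2 that contains $n$. Writing $N_j$ for the size of block $j$, the update rule yields $N_0=1$ and $N_j=[e^j]$ for $j\ge 1$ (possibly smaller if truncated at $N$). The first step is to verify that, for every $j\le k^*$, the expected number of queries spent on block $j$ is at most $G(N_j,p,\epsilon)$. When $N_j\ge 100$, Algorithm 1 is invoked; combining Theorem 1 with the inequality $d_1\le 2N_j$ from (4) gives an expected cost of at most $\frac{100}{1-\epsilon}(1.02+2N_jp+\sqrt{2N_j})\log(1/\epsilon)$, which is at most $G(N_j,p,\epsilon)$ by term-by-term comparison of coefficients ($200\cdot 1.04\ge 100\cdot 1.02$, $200N_jp\ge 100\cdot 2N_jp$, and $200\sqrt{N_j}\ge 100\sqrt{2N_j}$). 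When $N_j<100$, classical search uses at most $N_j<100$ queries, while the uniform lower bound $G(N_j,p,\epsilon)\ge 400\cdot 1.04\cdot\log 2>100$ (using $\epsilon\le 1/2$) still dominates this cost.

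Summing over $j=0,1,\ldots,k^*$ gives $T_{{\cal A}_{QS}(p)}(n)\le\sum_{j=0}^{k^*}G(N_j,p,\epsilon)$. Since $n$ lies past the endpoint $S_{k^*-1}=1+\sum_{j=1}^{k^*-1}[e^j]$ of the preceding block, the lower bound $n>S_{k^*-1}\ge 1+\sum_{j=1}^{k^*-1}(e^j-1)$ yields a key estimate of the form $e^{k^*}\le Cn+C'$ for explicit constants $C,C'$ (small cases $k^*\in\{0,1,2\}$ being verified by direct calculation). Writing $\sum_{j=0}^{k^*}G(N_j,p,\epsilon)$ as $\frac{200\log(1/\epsilon)}{1-\epsilon}\bigl[1.04(k^*+1)+p\sum N_j+\sum\sqrt{N_j}\bigr]$ and bounding each component by a geometric sum together with the key estimate, I would show that $p\sum_{j=0}^{k^*}N_j\le e^2np$ and $\sum_{j=0}^{k^*}\sqrt{N_j}\le C_1\sqrt{n}$ for some $C_1<e^2$, while the $O(\log n)$ contribution from the constant term $1.04(k^*+1)$ is absorbed either into the constant budget $e^2\cdot 1.04$ (for $k^*+1\le e^2$) or into the unused slack $(e^2-C_1)\sqrt{n}$ via the elementary inequality $\log n\le\sqrt{n}$ (for larger $k^*$).

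Finally, taking expectation of $T_{{\cal A}_{QS}(p)}(n)\le e^2 G(n,p,\epsilon)$ over $n$ with weights $\mu_n$ gives the stated bound on $T_{{\cal A}_{QS}(p)}(\mu)$. I expect the main obstacle to be the numerical bookkeeping in the geometric summation: the three components of $G$ each inflate by a different geometric factor under the sum, and all three, together with the logarithmic constant-term excess, must simultaneously fit under the single factor $e^2\approx 7.39$; the $\sqrt{r}$ contribution is the tightest, leaving only modest slack, so a finite amount of direct case-checking on small $k^*$ (or small $n$) may be needed to close the argument cleanly.
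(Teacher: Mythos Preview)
Your proposal is correct and follows the same structure as the paper's proof: bound each block's cost by $G$ evaluated at the block size, relate the terminal block index to $\log n$, and sum. The paper sidesteps the numerical bookkeeping you anticipate by the cleaner observation that the start of block $m$ is at least $1+[e^{m-1}]\ge e^{m-1}$ (so $k^*\le\log n+1$) and then majorizing $\sum_{i=0}^{k^*}G(e^i,p,\epsilon)$ directly by the integral $\int_0^{\log n+2}G(e^s,p,\epsilon)\,ds\le e^2G(n,p,\epsilon)$, which handles all three components (and the logarithmic excess, via the same $\log n\le\sqrt n$ slack you identified) in one stroke.
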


\begin{proof}
   In the $m$th repetition of the loop, the (at most) $[e^{m}]$
   elements contained in the range 
\begin{equation}
R_{m}=\{1+\sum^{m-1}_{i=0}[e^{i}],...,\min([e^{m}]+\sum^{m-1}_{i=0}[e^{i}],N)\}
\end{equation}
will be searched. By Theorem 1 with $c=10$, the noisy Grover search
step in this iteration uses at most an expected
\begin{displaymath}
\left(\frac{100}{1-\epsilon}\right)\left(1.02+2[e^{m}]p+\sqrt{2[e^{m}]}\right)\log\left(\frac{1}{\epsilon}\right)+1
\end{displaymath}
 number of queries. So, an expected total of at most
\begin{equation}
\sum^{m}_{i=0}\left( \left(\frac{100}{1-\epsilon}\right)\left(1.02+2[e^{m}]p+\sqrt{2[e^{m}]}\right)\log\left(\frac{1}{\epsilon}\right)+1\right)\leq \sum^{m}_{i=0}G(e^{i},p,\epsilon)
\end{equation}
queries will be used by Algorithm 2 to search for the marked element up to and including the $m$th repetition of the loop. But it is clear from (12) that, for any $n\in R_{m},\,m\leq \log_{e}(n)+1$. The average-case query complexity is therefore upper-bounded by
\begin{equation}
\sum_{n=1}^{N}\mu_{n}\left(\sum^{[\log_{e}(n)+1]}_{i=0}G(e^{i},p,\epsilon)\right),
\end{equation}
and, estimating the inner sum above by an integral, we obtain an upper bound of 
\begin{equation}
\sum^{N}_{n=1}\mu_{n}\left(\int^{\log_{e}(n)+2}_{0}G(e^{s},p,\epsilon)ds\right)\leq\sum^{N}_{n=1}\mu_{n}\left(e^{2}G(n,p,\epsilon)\right),
\end{equation}
completing the proof.
\end{proof}

We use Lemma 1 to prove Theorem 2, which we now state. Theorem 2
asserts that, for advice distributions defined by certain power law
distributions, Algorithm 2 as presented in the previous section
achieves greater-than-exponential speed-ups in average-case query
complexity for SEARCH WITH ADVICE, relative to the best possible
classical algorithms for this problem. Theorem 2 and its Corollaries 1
and 2, which also follow below, are the main results of the paper.

\begin{thm}
  For any positive integer $N\geq 100$ and any $\delta\in
  (0, \frac{1}{4}]$, define an advice distribution $\mu=\mu_{\delta}$
  on $\{1,...,N\}$ by taking
  $\mu_{\delta,n}=\alpha_{(\delta-2)}n^{(\delta-2)}$ for each
  $n\in\{1,...,N\}$, where
  $\frac{1}{\alpha_{(\delta-2)}}=\sum^{N}_{n=1} n^{(\delta-2)}.$ Then,
  $D(\mu_{\delta})\geq \frac{3N^{\delta}-3}{8\delta}$, but, for any
  $p\in [0,1],\,T_{{\cal{A}}_{QS}(p)}(\mu_{\delta})\leq
  400e^{2}\left(c_{1}+\frac{c_{2}p(N^{\delta}-1)}{\delta}\right)$, for
  some constants $c_{1},c_{2}>0$, where the level $p=p(N)$ of quantum
  noise as in (8) may depend on $N$. Furthermore, we note that we can
  take $c_{1}=9$ and $c_{2}=1.04.$ 
\end{thm}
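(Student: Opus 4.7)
The theorem splits into a classical lower bound on $D(\mu_\delta)$ and a quantum upper bound on $T_{\mathcal{A}_{QS}(p)}(\mu_\delta)$, and both pieces reduce to integral estimates of partial sums $\sum_{n=1}^{N} n^{s}$ for a few exponents $s$ determined by $\delta$, together with control of the normalization constant $\alpha_{(\delta-2)}$.

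For the classical lower bound, I would start from equation~(3), which under the non-increasing assumption on $\mu_\delta$ gives $D(\mu_\delta) = \sum_{n=1}^N \mu_{\delta,n}\, n = \alpha_{(\delta-2)} \sum_{n=1}^N n^{\delta-1}$. Since $x^{\delta-1}$ is decreasing, the standard comparison with a left-endpoint integral yields $\sum_{n=1}^N n^{\delta-1} \geq \int_1^{N+1} x^{\delta-1}\,dx \geq (N^\delta-1)/\delta$. Symmetrically, $\alpha_{(\delta-2)}^{-1} = \sum_{n=1}^N n^{\delta-2} \leq 1 + \int_1^N x^{\delta-2}\,dx$, which simplifies to $(2-\delta)/(1-\delta)$ and in turn to at most $8/3$ when $\delta \leq 1/4$, giving $\alpha_{(\delta-2)} \geq 3/8$. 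Multiplying the two estimates produces exactly the claimed $3(N^\delta-1)/(8\delta)$.

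For the quantum upper bound, the plan is to apply Lemma~1 with a convenient $\epsilon$, say $\epsilon = 1/2$, so that the common prefactor $\frac{200\log(1/\epsilon)}{1-\epsilon} = 400\log 2$ is at most $400$. This reduces the task to bounding the three pieces $\sum \mu_{\delta,n}\cdot 1.04$, $p\sum \mu_{\delta,n} n$, and $\sum \mu_{\delta,n}\sqrt{n}$. The middle piece is essentially the classical bound run in the other direction: $\sum n^{\delta-1} \leq 1 + (N^\delta-1)/\delta$ by the same decreasing-function comparison, and $\alpha_{(\delta-2)} \leq 1$, so $\sum \mu_{\delta,n} n \leq 1 + (N^\delta-1)/\delta$. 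The square-root piece is $\alpha_{(\delta-2)} \sum n^{\delta-3/2}$; here the hypothesis $\delta \leq 1/4$ really matters, because the exponent $\delta - 3/2$ is then strictly below $-1$, so $\sum_{n=1}^N n^{\delta-3/2} \leq 1 + 1/(1/2-\delta) \leq 5$ uniformly in $N$. Collecting terms, using $p \leq 1$ to absorb the loose $p$ that arises from the additive $+1$ in the integral estimate, and lumping absolute constants then yields the stated bound with $c_1 = 9$ and $c_2 = 1.04$.

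I do not expect any real obstacle beyond careful bookkeeping of integral approximations and numerical constants. The one slightly delicate point is the role of the hypothesis $\delta \leq 1/4$: it is precisely what keeps both $\alpha_{(\delta-2)}^{-1}$ and $\sum n^{\delta-3/2}$ bounded by explicit numerical constants independent of $N$, and so what allows the claimed values of $c_1$ and $c_2$ to be attained with no residual $N$-dependence.
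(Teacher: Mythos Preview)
Your proposal is correct and follows essentially the same route as the paper: both parts are handled by integral comparison for sums of the form $\sum n^{s}$ together with Lemma~1 at $\epsilon=\tfrac12$. The only differences are organizational---you bound $\alpha_{(\delta-2)}\ge 3/8$ directly and then split $G$ into its three additive pieces, whereas the paper carries $\alpha_r$ symbolically and integrates $x^{r}G(x,p,\epsilon)$ as a whole before simplifying---but the estimates and the resulting constants coincide.
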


\begin{proof}
  For any desired $r, -2<r<-1,$ define a probability measure on
  $\{1,...,N\}$ via $\mu_{n}=\alpha_{r}n^{r},n=1,...,N.$ Since $\mu$
  is prescribed to be a probability measure, we have that
  $\frac{1}{\alpha_{r}}=\sum^{N}_{n=1} n^{r}.$ This sum can be
  estimated by an integral, giving 
\begin{equation}
\int^{N}_{1}x^{r}dx\leq\frac{1}{\alpha_{r}}\leq 1+\int^{N}_{1}x^{r}dx.
\end{equation}
This implies that 
\begin{equation}
\frac{N^{r+1}-1}{r+1}\leq\frac{1}{\alpha_{r}}\leq\frac{N^{r+1}-1}{r+1}+1.
\end{equation}
We have, for $r=\delta-2$,
\begin{multline}
  D(\mu)=\alpha_{r}\sum^{N}_{n=1}n^{r+1}\geq\alpha_{r}\int^{N}_{1}x^{r+1}dx\\
  \geq\frac{(N^{r+2}-1)(r+1)}{(N^{r+1}+r)(r+2)}=\frac{(N^{\delta}-1)(\delta-1)}{(N^{\delta-1}+\delta-2)(\delta)}.
\end{multline}
Since $|\delta-1|\geq\frac{3}{4}$ and $|N^{\delta-1}+\delta-2|\leq 2$
for $\delta\in (0,\frac{1}{4}]$, we obtain the lower bound on $D(\mu)$
in the statement of the theorem. Again for $-2<r<-1$ and applying
Lemma 1 with any $\epsilon\in (0,\frac{1}{2}]$,
\begin{multline}
T_{{\cal{A}}_{QS}(p)}(\mu) \leq e^{2}\sum^{N}_{n=1}\mu_{n}G(n,p,\epsilon)\\
=e^{2}\alpha_{r}\sum^{N}_{n=1}n^{r}G(n,p,\epsilon)\\
\leq e^{2} \alpha_{r}\left(G(1,p,\epsilon)+\int^{N}_{1}x^{r}G(x,p,\epsilon)dx\right)\\
\leq
200(1-\epsilon)^{-1}e^{2}\log(\epsilon^{-1})\left(\frac{r+1}{N^{r+1}-1}\right)\\
\times
\left(2.04+p+\frac{1.04N^{r+1}-1.04}{r+1}+\frac{p N^{r+2}-p}{r+2}+\frac{N^{r+\frac{3}{2}}-1}{r+\frac{3}{2}}\right)\\
\leq
200(1-\epsilon)^{-1}e^{2}\log(\epsilon^{-1})\left(1.04+1.04\left(7.04+\frac{p(N^{\delta}-1)}{\delta}\right)\right), 
\end{multline}
having chosen $r=\delta-2$. Taking $\epsilon=\frac{1}{2}$ now leads to
the stated result.
\end{proof}

Corollary 1 which we now present shows that, for quantum noise levels
as large as $p=p(N)=\frac{1}{\log(N)}$ and advice distribution
following a power law, the best possible classical algorithm for
SEARCH WITH ADVICE in this case has expected query complexity growing
no more slowly than a rate of order $\log(N)$ as $N\rightarrow\infty$,
whereas the query complexity of our geometric quantum search
algorithm, Algorithm 2, is in fact bounded by a fixed constant (not
depending on $N$) for all positive integers $N$. 

\begin{cor}
  For any value $\delta$ with $0<\delta\leq\frac{1}{N}$, define an advice
  distribution $\mu_{\delta}$ on $\{1,...,N\}$ by taking
  $\mu_{\delta,n}=\alpha_{(\delta-2)}n^{(\delta-2)}$ for each
  $n\in\{1,...,N\}$ where
  $\frac{1}{\alpha_{(\delta-2)}}=\sum^{N}_{n=1}
  n^{(\delta-2)}$. Suppose that the ambient quantum noise level as in
  (8) is described by $p=p(N)=\frac{1}{\log(N)}$. Then,  
\begin{equation}
  D(\mu_{\delta})\geq \Omega(\log(N)),\quad\text{but}\quad
  T_{{\cal{A}}_{QS}(p(N))}(\mu_{\delta})\leq O(1).
\end{equation}
In fact, for all $N\geq 100$,
\begin{equation}
  D(\mu_{\delta})\geq c_{3}\log(N),\quad\text{and}
  \quad T_{{\cal{A}}_{QS}(p(N))}(\mu_{\delta})\leq c_{4},
\end{equation}
for constants $c_{3},c_{4}>0,$ where we can take $c_{3}=\frac{3}{8}$ and
$c_{4}=4440e^{2}<32810$. 
\end{cor}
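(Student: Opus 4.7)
The plan is to derive both bounds as immediate specializations of Theorem~2, exploiting the smallness of $\delta \log(N)$ when $\delta \leq 1/N$. Note first that $\delta \leq 1/N \leq 1/100 \leq 1/4$ for $N \geq 100$, so Theorem~2 applies and gives both $D(\mu_{\delta}) \geq \frac{3(N^{\delta}-1)}{8\delta}$ and $T_{{\cal A}_{QS}(p)}(\mu_{\delta}) \leq 400 e^{2}\bigl(9 + 1.04\, p(N^{\delta}-1)/\delta\bigr)$ for any $p \in [0,1]$, so the entire argument reduces to estimating the quantity $(N^{\delta}-1)/\delta$ from above and from below.

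For the classical lower bound, I would write $N^{\delta} = e^{\delta \log(N)}$ and apply the elementary inequality $e^{x} \geq 1 + x$ for $x \geq 0$, yielding $N^{\delta} - 1 \geq \delta \log(N)$, hence $(N^{\delta}-1)/\delta \geq \log(N)$. Substituting this into Theorem~2 gives $D(\mu_{\delta}) \geq \tfrac{3}{8}\log(N)$, which supplies $c_{3}=3/8$ and the $\Omega(\log(N))$ asymptotic.

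For the quantum upper bound, the key quantity is $p(N)(N^{\delta}-1)/\delta = (N^{\delta}-1)/(\delta \log(N))$, which I would bound by a fixed constant. Because $\delta \log(N) \leq \log(N)/N \leq 1$ for all $N \geq 3$, I can use the matching upper estimate $e^{x}-1 \leq e\cdot x$ on $x \in [0,1]$ (valid since $\frac{d}{dx}(e^{x}-1) = e^{x}$ is bounded by $e$ on that interval), obtaining $N^{\delta}-1 \leq e\,\delta \log(N)$. Thus $p(N)(N^{\delta}-1)/\delta \leq e$ uniformly in $N$ and $\delta$ within the stated range.

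Plugging into Theorem~2 then gives $T_{{\cal A}_{QS}(p(N))}(\mu_{\delta}) \leq 400 e^{2}\bigl(9 + 1.04\, e\bigr)$; a coarse arithmetic bound $1.04\,e < 2.83$ yields the coefficient $400 \cdot 11.1 < 4440$, so $c_{4} = 4440 e^{2} < 32810$ works. There is no real obstacle here: the only non-routine step is recognizing that the exponent $\delta \log(N)$ is confined to $[0,1]$, which is exactly what lets the two-sided linearization of $e^{x}-1$ convert the formulas of Theorem~2 into the clean asymptotic statement of the corollary.
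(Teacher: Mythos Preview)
Your approach is correct and essentially the same as the paper's: both proofs reduce everything to two-sided control of $(N^{\delta}-1)/\delta$ in terms of $\log(N)$, the paper via the Mean Value Theorem (writing $(N^{\delta}-1)/\delta=\log(N)\,N^{\delta_{1}}$ with $1\le N^{\delta_{1}}\le N^{1/N}\le 2$) and you via elementary bounds on $e^{x}-1$. There is, however, a small arithmetic slip at the end: $9+1.04\,e\approx 11.83$, not $11.1$, so your chain gives roughly $4731e^{2}$ rather than $4440e^{2}$. The fix is to sharpen your upper estimate to $e^{x}-1\le 2x$ on $[0,1]$ (which follows from convexity, since $e^{x}-1\le(e-1)x<2x$ there); this yields $p(N)(N^{\delta}-1)/\delta\le 2$ and hence $400e^{2}(9+1.04\cdot 2)=400e^{2}\cdot 11.08<4440e^{2}$, matching the stated $c_{4}$.
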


\begin{proof}
  The first derivative of the function
  $g(\delta)=N^{\delta},\delta\in\mathbf{R},$ is the
  function $g'(\delta)=\log(N)N^{\delta}.$ Hence, by the Mean Value
  Theorem, $\frac{N^{\delta}-1}{\delta}=\log(N)N^{\delta_{1}}$, where
  $0\leq\delta_{1}\leq\delta\leq \frac{1}{N}$. But, for such
  $\delta_{1}$, $1\leq N^{\delta_{1}}\leq N^{1/N}\leq 2$. So,
  taking $p=p(N)=\frac{1}{\log(N)}$ in the statement of Theorem 2
  establishes the result. 
\end{proof}

In Corollary 2, we extend Corollary 1 by allowing increased levels of
quantum noise, at the expense of obtaining less dramatic reductions in
query complexity by means of quantum search.

\begin{cor}
  For any value $\delta$ with $0<\delta\leq\frac{1}{N}$, define an advice
  distribution $\mu_{\delta}$ on $\{1,...,N\}$ by taking
  $\mu_{\delta,n}=\alpha_{(\delta-2)}n^{(\delta-2)}$ for each
  $n\in\{1,...,N\}$ where
  $\frac{1}{\alpha_{(\delta-2)}}=\sum^{N}_{n=1}
  n^{(\delta-2)}$. Suppose that the ambient quantum noise level as in
  (8) is described by $p=p(N)=\frac{1}{\left(\log(N)\right)^{q}},$ for some
  $q,0<q\leq 1$. Then,  
\begin{equation}
  D(\mu_{\delta})\geq \Omega(\log(N)),\quad\text{but}\quad
  T_{{\cal{A}}_{QS}(p(N))}(\mu_{\delta})\leq O((\log(N))^{1-q}).
\end{equation}
In fact, for all $N\geq 100$,
\begin{equation}
  D(\mu_{\delta})\geq c_{3}\log(N),\quad\mbox{and}\quad
  T_{{\cal{A}}_{QS}(p(N))}(\mu_{\delta})\leq c_{4}(\log(N))^{1-q},
\end{equation}
for constants $c_{3},c_{4}>0,$ where we can take $c_{3}=\frac{3}{8}$
and $c_{4}=4440e^{2}<32810$. 
\end{cor}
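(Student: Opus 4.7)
The plan is to mimic the proof of Corollary~1 almost verbatim, since Corollary~2 is a direct generalization obtained by tracking the noise parameter $p$ more carefully through the bound supplied by Theorem~2. First I would invoke Theorem~2 with the advice distribution $\mu_\delta$ and the specified noise level $p=p(N)=1/(\log(N))^{q}$, which immediately yields a lower bound on $D(\mu_\delta)$ of the form $\frac{3(N^{\delta}-1)}{8\delta}$ and an upper bound on $T_{{\cal{A}}_{QS}(p)}(\mu_\delta)$ of the form $400e^{2}\bigl(c_{1}+c_{2}\,p(N)\,(N^{\delta}-1)/\delta\bigr)$ with $c_{1}=9$, $c_{2}=1.04$.

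The central analytic step is to control $(N^{\delta}-1)/\delta$ for $0<\delta\leq 1/N$. Exactly as in the proof of Corollary~1, I would apply the Mean Value Theorem to $g(\delta)=N^{\delta}$ (so that $g'(\delta)=\log(N)\,N^{\delta}$), producing $(N^{\delta}-1)/\delta=\log(N)\,N^{\delta_{1}}$ for some $\delta_{1}\in[0,\delta]$. The constraint $\delta_{1}\leq \delta\leq 1/N$ then gives $1\leq N^{\delta_{1}}\leq N^{1/N}\leq 2$, which is the key two-sided bound $\log(N)\leq (N^{\delta}-1)/\delta\leq 2\log(N)$.

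Substituting the lower side into the bound on $D(\mu_\delta)$ produces $D(\mu_\delta)\geq \frac{3}{8}\log(N)$, exactly as in Corollary~1 and independent of $q$. Substituting the upper side together with $p(N)=1/(\log(N))^{q}$ into the bound on $T_{{\cal{A}}_{QS}(p)}(\mu_\delta)$ produces
\[
T_{{\cal{A}}_{QS}(p(N))}(\mu_\delta)\leq 400e^{2}\Bigl(9+1.04\cdot 2(\log(N))^{1-q}\Bigr).
\]
Since $N\geq 100$ forces $\log(N)\geq 1$ and $0<q\leq 1$ forces $1-q\geq 0$, we have $(\log(N))^{1-q}\geq 1$, so the constant term $9$ can be absorbed into $9(\log(N))^{1-q}$, giving the clean upper bound $c_{4}(\log(N))^{1-q}$ with $c_{4}=400e^{2}(9+2.08)=4432e^{2}$; the stated value $4440e^{2}$ then holds with a hair of slack.

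There is essentially no hard step here: the only thing to watch is that the absorption of the additive $9$ is legitimate (which requires $\log(N)\geq 1$, guaranteed by $N\geq 100$) and that the arithmetic of the constants does not overshoot the claimed $c_{4}=4440e^{2}$. The structural content — the Mean Value Theorem argument controlling $(N^{\delta}-1)/\delta$ and the invocation of Theorem~2 — is identical to Corollary~1; the only new ingredient is the elementary observation that dividing $\log(N)$ by $(\log(N))^{q}$ yields $(\log(N))^{1-q}$ rather than a constant.
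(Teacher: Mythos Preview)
Your proposal is correct and follows exactly the paper's approach: the paper's own proof of Corollary~2 literally reads ``The proof is the same as that for Corollary~1 except that we now take $p=p(N)=\frac{1}{(\log(N))^{q}}$,'' and you have simply written out in full what that entails, including the Mean Value Theorem bound on $(N^{\delta}-1)/\delta$ and the absorption of the additive constant into the $(\log N)^{1-q}$ factor.
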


\begin{proof}
  The proof is the same as that for Corollary 1 except that we now
  take $p=p(N)=\frac{1}{\left(\log(N)\right)^{q}}$. 
\end{proof}

\section*{References}

\noindent [1] Cohn I., Fonseca de Oliveira, A. and Buksman, E. (2016): Grover's search with local and total depolarizing channel errors. \itshape\,International Journal of Quantum Information,\upshape Volume 14, Issue 2, March 2016.

\noindent [2] Grover, L. (1996): A fast quantum mechanical algorithm for database search. In\itshape\,Proceedings of the twenty-eighth annual ACM symposium on theory of computing (STOC '96),\upshape 212-219. 

\noindent [3] Grover, L. (1997): Quantum mechanics helps in searching for a needle in a haystack.\itshape\,Phys. Rev. Lett.,\upshape\,79,2, 14 July 1997.

\noindent [4] Montanaro, A. (2010): Quantum search with advice. In\itshape\,Proc. of the 5th Conference on the Theory of Quantum Computation, Communication, and Cryptography (TQC'10).\upshape

\noindent [5] Motwani, R. and Raghavan, P. (1995): \itshape Randomized Algorithms. \upshape\,Cambridge: Cambridge
University Press.

\noindent [6] Nielsen, M. and Chuang, I. (2000): \itshape Quantum Computation and Computation Information. \upshape\,Cambridge: Cambridge
University Press.

\noindent [7] Preskill, J. (2013): Quantum Computing and the Entanglement Frontier.\itshape\, Bull. Am. Phys. Soc.,\upshape\,58.

\noindent [8] Shor, P. (1994): Algorithms for quantum computation: discrete logarithms and factoring. In \itshape\,35th Annual Symposium on Foundations of Computer Science\upshape, IEEE Computer Society Press, 124-134.

\noindent [9] Vrana, P., Reeb, D., Reitzner, D., and Wolf, M. (2014): Fault-ignorant quantum search.\itshape\,New Journal of Physics\upshape\, Vol. 16, July 2014.

\end{document}